\newtheorem{theorem}{Theorem}
\newtheorem{corollary}[theorem]{Corollary}
\newtheorem{lemma}[theorem]{Lemma}
\newtheorem{proposition}[theorem]{Proposition}
\newtheorem{remark}[theorem]{Remark}
\newenvironment{proof}[1][Proof]{\noindent\textbf{#1.} }{\ \rule{0.5em}{0.5em}}
\begin{document}

\title{Majorization and additivity for multimode bosonic Gaussian channels}

\author{V. Giovannetti$^{1}$, A. S. Holevo$^{2}$, A. Mari$
^{1}$ \\
\\
{\small $^{1}$NEST, Scuola Normale Superiore and Istituto Nanoscienze-CNR, }
\\
{\small I-56127 Pisa, Italy,} \\
{\small $^2$Steklov Mathematical Institute, RAS, Moscow, Russia}}

\date{}
\maketitle

\begin{abstract}
In the present paper the multimode extension of the majorization theorem for bosonic Gaussian channels is obtained,
in particular, sufficient conditions are given, under which the Glauber's
coherent states are the only minimizers for concave functionals of the output state of such a channel. Direct implications
of this multimode majorization result to the positive solution of the famous additivity
problem in the case of Gaussian channels are discussed. In particular, the additivity of the output R\'{e}nyi entropies of arbitrary
order $p>1$ is demonstrated. Finally, an alternative, more direct derivation is given for a majorization property of the Husimi function established by Lieb and Solovej.
\end{abstract}

\section{Introduction}

Recently, the longstanding \textit{Gaussian
optimizer conjecture} in quantum information theory was proven for the class of bosonic Gaussian
gauge-covariant or contravariant channels \cite{P-2}. The conjecture says that
the minimum output entropy of a bosonic Gaussian channel is attained on the
vacuum state (as well as on any coherent state). In \cite{P-1} this result was strengthened
for one-mode channels by establishing that the output for the vacuum or
coherent input  \textit{majorizes} the output for
any other input, in that it minimizes a broad class of concave functionals
of the output states. For a detailed discussion of motivation and applications of these advances to quantum optics and communications we refer to
\cite{P-2}, \cite{P-1}.

In the present paper we obtain further results in this direction. In
Sec. \ref{mlt} we give the multimode extension of the result of \cite{P-1},
and in particular, a precise formulation of sufficient conditions under which the
coherent states are \textit{the only} minimizers. We also discuss direct implications
of this multimode majorization result to the positive solution of another famous conjecture, namely the \textit{additivity
problem} for the Gaussian channels. In particular, we demonstrate the additivity of the output R\'{e}nyi entropies of arbitrary
order $p>1$, which generalizes a result of Giovannetti and Lloyd \cite{gl} for
integer $p$ and special channels.

In Sec. \ref{maj} we generalize a majorization result of Lieb and Solovej
\cite{ls}, basing on the method of the work \cite{P-2}. Wehrl \cite{wehrl}
introduced the \textquotedblleft classical entropy\textquotedblright\ of a
quantum state $\rho $ by the formula
\begin{equation*}
S_{cl}(\rho )=-\int_{\mathbb{C}^{s}}\langle z|\rho |z\rangle \log \langle
z|\rho |z\rangle \frac{d^{2s}z}{\pi ^{s}},
\end{equation*}
where $\langle z|\rho |z\rangle $ is the Husimi function, $|z\rangle $ are
Glauber's coherent vectors, $s$ -- number of the modes. Lieb \cite{lieb}
used exact constants in the Hausdorff-Young inequality (Fourier transform)
and Young inequality (convolution) to prove Wehrl's conjecture \cite{wehrl}: \textit{$
S_{cl}(\rho )$ is minimized by any coherent state $\rho =|\zeta \rangle
\langle \zeta |.$} Recently Lieb and Solovej \cite{ls} gave another
derivation based on the limit version of a similar result for Bloch spin
coherent states. Moreover, in this way they could establish the majorization
property of Glauber's coherent states. In Sec. \ref{maj} we suggest yet a different
(and perhaps most natural) approach to the proof of this property and its
generalization motivated by the recent solution of the Gaussian optimizers
problem \cite{P-2}.

\section{Majorization for gauge-covariant channels}

\label{mlt}

We start with repeating some definitions and notations from \cite{P-2},
restricting to the case of channels with identical input and output spaces.
Consider $s-$ dimensional complex Hilbert space $\mathbf{Z}$ which can be
considered as $2s-$dimensional real space equipped with the symplectic form $
z,z^{\prime }\rightarrow \mathrm{Im\,}z^{\ast }z^{\prime }.$ We will
consider vectors in $\mathbf{Z}$ as $s-$dimensional complex column vectors,
in which case (complex-linear) operators in $\mathbf{Z}$ are represented by
complex $s\times s-$matrices, and $^{\ast }$ denotes Hermitian conjugation.
The gauge group acts in $\mathbf{Z}$ as multiplication by $e^{i\phi },$
where $\phi $ is real number called phase. The Weyl quantization is
described by the unitary displacement operators $D(z)$ acting irreducibly in
the representation space $\mathcal{H}$ and satisfying the canonical
commutation relation
\begin{equation}
D(z)D(z^{\prime })=\exp \left( -i\mathrm{Im\,}z^{\ast }z^{\prime }\right)
D(z+z^{\prime }).  \label{defDD}
\end{equation}
Introducing the annihilation-creation operators of the system $a_{j}$, $
a_{j}^{\dag };\; j=1,\dots ,s$, which satisfy the commutation relations $\left[
a_{j\,,}a_{k}^{\dag }\right] =\delta _{jk}I$, the operator $D(z)$ can be
expressed as
\begin{equation}
D(z)=\exp \sum_{j=1}^{s}\left( z_{j}a_{j}^{\dag }-\bar{z}_{j}a_{j}\right) .
\label{displacement}
\end{equation}

The gauge group has the unitary representation $\phi \rightarrow U_{\phi
}=e^{i\phi N}$ in $\mathcal{H}$ where $N=\sum_{j=1}^{s}a_{j}^{\dag }a_{j}$
is the total number operator. The representation of the gauge group in $
\mathcal{H}$ acts according to the relation $U_{\phi }^{\ast }D(z)U_{\phi
}=D(\mathrm{e}^{i\phi }z),\,\phi \in \lbrack 0,2\pi ].$ A state $\rho $ is
then said to be gauge-invariant if it commutes with all $U_{\phi }$ or,
equivalently, if its characteristic function $\phi (z)=\mathrm{Tr}\rho D(z)$
is invariant under the action of the gauge group. In particular Gaussian
gauge-invariant states are given by the characteristic function of the form
\begin{equation}
\phi (z)=\exp \left( -z^{\ast }\alpha z\right) ,  \label{gausstate}
\end{equation}
where $\alpha $ is a complex correlation matrix satisfying $\alpha \geq
I/2,\,I$ -- the unit $s\times s-$matrix. The vacuum state $|0\rangle \langle
0|$ corresponds to $\alpha =I/2$.

A channel $\Phi $ in $\mathcal{H}$ \ is completely positive trace preserving
map of the Banach space of trace-class operators in $\mathcal{H}$, see, e.g. \cite{H-SCI} for detail. The
channel is called \emph{gauge-covariant} if
\begin{equation}
\Phi \lbrack U_{\phi }\rho U_{\phi }^{\ast }]=U_{\phi }\Phi \lbrack \rho
]U_{\phi }^{\ast }.  \label{g-c}
\end{equation}
In the Heisenberg picture, a bosonic Gaussian gauge-covariant channel $\Phi $
\cite{P-2} is described by the action of its adjoint $\Phi ^{\ast }$ onto
the displacement operators as follows:
\begin{equation}
\Phi ^{\ast }[D(z)]=D(K^{\ast }z)\exp \left( -z^{\ast }\mu z\right) ,
\label{defprima}
\end{equation}
where $K$ is a complex matrix and $\mu $ is Hermitian matrix satisfying the
inequality
\begin{equation}
\mu \geq \pm \frac{1}{2}\left( I-KK^{\ast }\right) .  \label{ineq}
\end{equation}

The gauge-covariant channel is \emph{quantum-limited} if $\mu $ is a minimal
solution of the inequality (\ref{ineq}). Special cases of the maps~(\ref
{defprima}) are provided by the attenuator and amplifier channels,
characterized by matrix $K$ fulfilling the inequalities, $KK^{\ast }\leq I$
and $KK^{\ast }\geq I$, respectively. We are particularly interested in \emph{
quantum-limited attenuator} which corresponds to
\begin{equation}
KK^{\ast }\leq I,\qquad \qquad \mu =\frac{1}{2}\left( I-KK^{\ast }\right) ,
\label{mindef1}
\end{equation}
and \emph{quantum-limited amplifier}
\begin{equation}
KK^{\ast }\geq I,\qquad \qquad \mu =\frac{1}{2}\left( KK^{\ast }-I\right) .
\label{mindef2}
\end{equation}
These channels are diagonalizable: by using  singular value decomposition $
K=V_{B}K_{d}V_{A},$ where $V_{A},V_{B}$ are unitaries and $K_{d}$ is
a diagonal matrix with nonnegative values on the diagonal, we have $KK^{\ast
}=V_{B}K_{d}K_{d}{}^{\ast }V_{B}^{\ast },$ and
\begin{equation}
\Phi \lbrack \rho ]=U_{B}\Phi _{d}[U_{A}\rho U_{A}^{\ast }]U_{B}^{\ast },
\label{unieq}
\end{equation}
where $\Phi _{d}$ is a tensor product
\begin{equation}
\Phi _{d}=\otimes _{j=1}^{s}\Phi _{j}  \label{product}
\end{equation}
of one-mode quantum-limited channels defined by the matrix $K_{d}$ and where
$U_{A}$, $U_{B}$ are the canonical unitary transformations acting on $
\mathcal{H}$, such that
\begin{equation*}
U_{B}^{\ast }D(z)U_{B}=D(V_{B}^{\ast }z),\qquad \qquad U_{A}^{\ast
}D(z)U_{A}=D(V_{A}^{\ast }z),
\end{equation*}
(notice that $U_{A}|0\rangle =|0\rangle ,U_{B}|0\rangle =|0\rangle $).

\begin{theorem}
\label{T1} (i) Let $\Phi $ be a Gaussian gauge-covariant channel, and let $f$
be a concave function on $[0,1],$ such that $f(0)=0,$ then
\begin{equation}
\mathrm{Tr}f(\Phi \lbrack \rho ])\geq \mathrm{Tr}f(\Phi \lbrack |\zeta
\rangle \langle \zeta |])=\mathrm{Tr}f(\Phi \lbrack |0\rangle \langle 0|]).
\label{main1}
\end{equation}
for all states $\rho $ and any coherent state $|\zeta \rangle \langle \zeta
| $ (the value on the right is the same for all coherent states by the
unitary covariance property of a Gaussian channel \cite{H-SCI}).

(ii) If $f$ is strictly concave, and the channel $\Phi $ satisfies one of
the two conditions:

a) $K$ is invertible and\footnote{
For Hermitian matrices $M,N,$ the strict inequality $M>N$ means that $M-N$
is positive definite.}
\begin{equation}
\mu >\frac{1}{2}\left( KK^{\ast }-I\right) ;  \label{mugr}
\end{equation}

b) $KK^{\ast}>I$ and $\mu = \frac{1}{2}\left(KK^{\ast }- I\right)$ (hence $
\Phi$ is a quantum-limited amplifier),

then the equality in (\ref{main1}) is attained only when $\rho $ is a coherent
state.
\end{theorem}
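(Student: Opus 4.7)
The plan is to reduce the multimode statement to the one-mode majorization of \cite{P-1} via the canonical factorization of Gaussian gauge-covariant channels into quantum-limited parts, and then lift the resulting one-mode inequality to a tensor product of amplifiers acting on arbitrary (possibly entangled) multimode inputs. First, following the method of \cite{P-2}, I would factor $\Phi=\Phi_{\mathrm{amp}}\circ\Phi_{\mathrm{att}}$, with $\Phi_{\mathrm{att}}$ a quantum-limited attenuator and $\Phi_{\mathrm{amp}}$ a quantum-limited amplifier. Explicitly, pick any $K_{2}$ with $K_{2}K_{2}^{\ast}=\mu+\frac{1}{2}(KK^{\ast}+I)$---positive definite by (\ref{ineq})---and set $K_{1}=K_{2}^{-1}K$; one checks $K_{1}K_{1}^{\ast}\le I$ and that the composition law for $\Phi^{\ast}$ on displacement operators matches (\ref{defprima}). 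Since any quantum-limited attenuator fixes the vacuum, $\Phi[|0\rangle\langle 0|]=\Phi_{\mathrm{amp}}[|0\rangle\langle 0|]$, so (\ref{main1}) reduces to the amplifier statement: $\Tr f(\Phi_{\mathrm{amp}}[\sigma])\ge\Tr f(\Phi_{\mathrm{amp}}[|0\rangle\langle 0|])$ for every input $\sigma$.

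Next, using the singular value decomposition of $K_{2}$ together with (\ref{unieq})--(\ref{product}), and the fact that the conjugating unitaries $U_{A},U_{B}$ preserve the vacuum and leave $\Tr f$ invariant, I would further reduce to the case $\Phi_{\mathrm{amp}}=\Phi_{d}=\bigotimes_{j=1}^{s}\Phi_{j}$, a tensor product of one-mode quantum-limited amplifiers whose vacuum outputs $\tau_{j}$ are single-mode thermal states.

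The core step is then lifting the one-mode majorization of \cite{P-1}, which gives $\tau_{j}\succ\Phi_{j}[\rho_{j}]$ for every single-mode state $\rho_{j}$, to the tensor-product channel on arbitrary $\sigma$. The key covariance identity $\Phi_{d}[|\vec\alpha\rangle\langle\vec\alpha|]=D(K_{d}\vec\alpha)\bigl(\bigotimes_{j}\tau_{j}\bigr)D(K_{d}\vec\alpha)^{\ast}$ shows that every coherent-state input produces a unitary displacement of the product thermal state; together with concavity and unitary invariance of $\Tr f$, this immediately proves the inequality when $\sigma$ admits a nonnegative Glauber--Sudarshan $P$-function. The main obstacle is the general case, since the multimode $P$-function is typically only a tempered distribution, and moreover majorization is \emph{not} stable under tensor products with auxiliary channels in the way a naive mode-by-mode induction would require. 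The proposed resolution is a regularization: pre-compose $\Phi_{d}$ with an auxiliary quantum-limited amplifier of sufficiently large gain (say $KK^{\ast}\ge 2I$) so that the output $P$-function is a genuine probability density, apply the Jensen-type estimate there, and then take a limit, invoking the one-mode result of \cite{P-1} mode by mode to control the correction coming from the auxiliary amplifier. This is the classicalizing mechanism central to \cite{P-2}, and it is where the hard analytic work resides.

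For part (ii), I would track the equality conditions through each step. Under (a), $K$ invertible with $\mu>\frac{1}{2}(KK^{\ast}-I)$ forces $K_{2}K_{2}^{\ast}>KK^{\ast}\ge 0$, so $K_{1}K_{1}^{\ast}<I$ and $K_{2}K_{2}^{\ast}>I$ strictly; each diagonal factor of the amplifier is then a strict amplifier. Strict concavity of $f$ combined with the one-mode uniqueness result of \cite{P-1} forces $\Phi_{\mathrm{att}}[\rho]$ to be, after the diagonalizing unitary $U_{A}$, a product of single-mode coherent states; injectivity of the invertible quantum-limited attenuator on the coherent manifold then pushes $\rho$ itself to a coherent state. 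Under (b), $\Phi_{\mathrm{att}}=\mathrm{id}$ and one applies the strict-amplifier one-mode uniqueness of \cite{P-1} directly to each factor of $\Phi_{d}$.
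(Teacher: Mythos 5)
Your reductions agree with the paper up to the point where the real difficulty begins: the factorization $\Phi=\Phi_{2}\circ\Phi_{1}$ into quantum-limited attenuator and amplifier, the concavity/vacuum-invariance argument reducing (\ref{main1}) to the amplifier alone, and the passage to the diagonal amplifier $\Phi_{d}=\otimes_{j}\Phi_{j}$ are all as in the paper. The gap is in your ``core step''. First, the regularization you propose does not do what you claim: pre-composing with a quantum-limited amplifier of gain $\kappa$ turns the normally ordered characteristic function of the output into $\phi_{\rho}(w)\exp\left(-\tfrac{\kappa-2}{2\kappa}|w|^{2}\right)$, i.e.\ the Wigner function smoothed by a Gaussian strictly narrower than the vacuum for every finite $\kappa$; a nonnegative $P$-density is guaranteed only in the limit $\kappa\to\infty$ (the Husimi function), so ``gain $\geq 2$'' does not produce a genuine probability density. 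Second, even granting positivity, the inequality you would obtain concerns the noisier channel $\Phi_{d}\circ\Phi_{\mathrm{aux}}$, and your sketch provides no mechanism for transferring it to $\Phi_{d}$ in the limit. Third --- and this is decisive --- you yourself note that a mode-by-mode invocation of the one-mode result of \cite{P-1} cannot handle inputs entangled across the modes, yet your resolution still rests on invoking \cite{P-1} ``mode by mode to control the correction''. That multimode obstruction is exactly what the theorem is about, and your sketch leaves it unresolved.

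The paper's mechanism, absent from your proposal, is different: for the diagonal multimode amplifier the complementary channel factorizes as $\tilde{\Phi}_{2}=T\circ\Phi_{2}\circ\Phi_{1}^{\prime}$, where $\Phi_{1}^{\prime}$ is the attenuator with $\tilde{K}_{2}=\sqrt{I-K_{2}^{-2}}$; this identity is an identity of channels obtained as the tensor product of the one-mode identities, so it holds on arbitrary entangled inputs without any mode-by-mode majorization. Since a channel and its complement have the same nonzero output spectra on pure inputs, $\mathrm{Tr}f(\Phi_{2}[\psi])=\mathrm{Tr}f(\Phi_{2}[\Phi_{1}^{\prime}[\psi]])\geq\sum_{j}p_{j}^{\prime}\,\mathrm{Tr}f(\Phi_{2}[\phi_{j}^{\prime}])$, which for strictly concave $f$ forces any pure minimizer to have pure attenuator output; the multimode Lemma \ref{2} (pure in, pure out for an attenuator with $0<\tilde{K}_{2}<I$ implies coherent input) then identifies the minimizers, and two approximation arguments (polygonal $f$, and $K^{(n)}>I$ tending to $K_{2}$) remove the strictness assumptions --- steps your proposal also omits. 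Your part (ii) inherits the same defect, since it again identifies minimizers factor by factor; the paper instead applies Lemma \ref{2} to the full multimode attenuator $\Phi_{1}$ in case (a) and runs the complementary-channel argument directly on $\Phi=\Phi_{2}$ in case (b).
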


In the case of one mode such a result was obtained in \cite{P-1}.
Our goal here is to generalize it to the case of many modes, in particular by
making precise the conditions in the statement (ii).

\begin{proof}
 (i) By concavity it is sufficient to prove (\ref{main1}) for pure
states $\rho =|\psi \rangle \langle \psi |$. As shown in \cite{P-2} (see
also Appendix 1, Proposition \ref{prop1}), any gauge-covariant channel can
be represented as concatenation $\Phi =\Phi _{2}\circ \Phi _{1}$ of
quantum-limited attenuator $\Phi _{1}$ with operator $K_{1}$ and
quantum-limited amplifier $\Phi _{2}$ with operator $K_{2}.$ Then an argument
similar to \cite{P-1} shows that it is sufficient to prove (\ref{main1})
only for the amplifier $\Phi _{2}.$ Indeed, assume that for any state vector
$|\psi \rangle $
\begin{equation}
\mathrm{Tr}f(\Phi _{2}[|\psi \rangle \langle \psi |])\geq \mathrm{Tr}f(\Phi
_{2}[|0\rangle \langle 0|]).  \label{amp}
\end{equation}
Consider the spectral decomposition $\Phi _{1}[|\psi \rangle \langle \psi
|]=\sum_{j}p_{j}|\phi _{j}\rangle \langle \phi _{j}|,$ where $p_{j}>0,$ then
\begin{eqnarray}
\mathrm{Tr}f(\Phi \lbrack |\psi \rangle \langle \psi |]) &=&\mathrm{Tr}
f(\Phi _{2}[\Phi _{1}[|\psi \rangle \langle \psi |]])  \label{decomp} \\
&\geq &\sum_{j}p_{j}\mathrm{Tr}f(\Phi _{2}[|\phi _{j}\rangle \langle \phi
_{j}|])  \label{decompa} \\
&\geq &\mathrm{Tr}f(\Phi _{2}[|0\rangle \langle 0|])  \label{decomb} \\
&=&\mathrm{Tr}f(\Phi _{2}[\Phi _{1}[|0\rangle \langle 0|]])=\mathrm{Tr}
f(\Phi \lbrack |0\rangle \langle 0|]),  \label{decomc}
\end{eqnarray}
because vacuum is an invariant state of a quantum-limited attenuator.

Let us now prove (\ref{amp}). Since
\begin{equation*}
\min_{\rho }\mathrm{Tr}f(\Phi _{2}[\rho ])=\min_{\rho }\mathrm{Tr}
f(U_{B}\Phi _{d}[U_{A}^{\ast }\rho U_{A}]U_{B}^{\ast })=\min_{\rho }\mathrm{
Tr}f(\Phi _{d}[\rho ]),
\end{equation*}
it is sufficient to consider the diagonal amplifier. The proof for one-mode
quantum-limited amplifier is based on the fact that the complementary
channel has the representation (also based on Proposition \ref{prop1} \cite
{P-2})
\begin{equation}
\tilde{\Phi}_{2}=T\circ \Phi _{2}\circ \Phi _{1}^{\prime },  \label{rep}
\end{equation}
where $\mathrm{T}$ is transposition defined by the relation {$\mathrm{T}
[D(z)]=D(-\bar{z})$ $\,$(}$\bar{z}$ is the complex conjugate vector), and $
\Phi _{1}^{\prime }$ is another quantum-limited attenuator defined by the
operator $K'_{1}=\sqrt{I-K_{2}^{-2}}$. But for a diagonal multimode
amplifier the expression for the complementary channel and also
representation (\ref{rep}) (with diagonal $\Phi _{1}^{\prime }$) follows
from the results for each mode.

The representation (\ref{rep}) implies that
nonzero spectra of the density operators $\Phi _{2}[\rho ]$ and $\Phi
_{2}\circ \Phi _{1}^{\prime }[\rho ]$ coincide for pure inputs $\rho = |\psi \rangle \langle \psi |$ \cite
{P-2}. Then similarly to (\ref{decomp})-(\ref{decompa})
\begin{eqnarray}
\mathrm{Tr}f(\Phi _{2}[|\psi \rangle \langle \psi |]) &=&\mathrm{Tr}f(\Phi
_{2}[\Phi _{1}^{\prime }[|\psi \rangle \langle \psi |]])  \label{decomp1} \\
&\geq &\sum_{j}p_{j}^{\prime }\mathrm{Tr}f(\Phi _{2}[|\phi _{j}^{\prime
}\rangle \langle \phi _{j}^{\prime }|]),  \notag
\end{eqnarray}
where
\begin{equation}
\Phi _{1}^{\prime }[|\psi \rangle \langle \psi |]=\sum_{j}p_{j}^{\prime
}|\phi _{j}^{\prime }\rangle \langle \phi _{j}^{\prime }|,\,p_{j}^{\prime
}>0,  \label{spd}
\end{equation}
is the spectral decomposition of the output of the quantum-limited
attenuator $\Phi _{1}^{\prime }.$ Assume for a moment that $f$ is strictly
concave, then one arrives to the conclusion that for any pure minimizer $
\rho =|\psi \rangle \langle \psi |$ of $\ \mathrm{Tr}f(\Phi _{2}[|\psi
\rangle \langle \psi |])$ the sum (\ref{spd}) necessarily contains only one term, i.e.
\begin{equation}
\Phi _{1}^{\prime }[|\psi \rangle \langle \psi |]=|\phi ^{\prime }\rangle
\langle \phi ^{\prime }|.  \label{pure}
\end{equation}
Indeed, otherwise by the strict concavity the inequality in (\ref{decomp1}) is strict, contradicting
the assumption that $|\psi \rangle \langle \psi |$ is a minimizer of $\
\mathrm{Tr}f(\Phi _{2}[|\psi \rangle \langle \psi |])$ (strict concavity of $
f$ \ also excludes non-pure minimizers). Next, we first consider the amplifier
with $K_{2}>I,$ then the associated attenuator $\Phi _{1}^{\prime }$ is defined by the
operator $K'_{1}=\sqrt{I-K_{2}^{-2}},$ such that $0<K'_{1}<I.$
We then apply the following

\begin{lemma}
\label{2} \ Let $\Phi _{1}^{\prime }$ be a diagonal quantum-limited
attenuator defined by the operator $K'_{1},$ such that $0<K'_{1}<I.$ Then (\ref{pure}) implies that $|\psi \rangle \langle \psi |$ is a
coherent state.
\end{lemma}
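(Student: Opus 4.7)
The plan is to use the explicit Kraus representation of the diagonal quantum-limited attenuator and reduce the problem to a joint-eigenvector calculation for the annihilation operators.

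The first step is to write down the Kraus operators in a convenient factorized form. For a single-mode quantum-limited attenuator with parameter $0<\tilde k<1$, a computation in the number basis yields
\[
A_l\;=\;\frac{(1-\tilde k^{2})^{l/2}}{\sqrt{l!}\,\tilde k^{l}}\;a^{l}\,\tilde k^{N},\qquad l=0,1,2,\dots,
\]
so each $A_l$ is a nonzero scalar multiple of $a^l\tilde k^{N}$. Since $\Phi_1'=\bigotimes_{j=1}^{s}\Phi_j$ is diagonal, its Kraus operators are tensor products of the one-mode ones, and using commutativity of operators across different modes one obtains
\[
A_{\vec l}\;\propto\;\Big(\textstyle\prod_{j=1}^{s}a_j^{\,l_j}\Big)\,K,\qquad K:=\prod_{j=1}^{s}\tilde k_j^{N_j},
\]
where $\tilde k_j$ are the diagonal entries of $\tilde K_2$ and $\vec l=(l_1,\dots,l_s)\in\mathbb{Z}_{\ge 0}^{s}$.

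Next, I would translate the purity hypothesis $\Phi_1'[|\psi\rangle\langle\psi|]=|\phi'\rangle\langle\phi'|$ into the statement that every vector $A_{\vec l}|\psi\rangle$ must be a (possibly zero) scalar multiple of $|\phi'\rangle$, since the left-hand side is a sum of rank-one positive operators and has rank one. Setting $|\chi\rangle:=K|\psi\rangle$, which is nonzero because $K$ is injective (each $\tilde k_j>0$), the choice $\vec l=0$ gives $|\chi\rangle\propto|\phi'\rangle$, while $\vec l=e_j$ gives $a_j|\chi\rangle\propto|\chi\rangle$. Thus $|\chi\rangle$ is a joint eigenvector of $a_1,\dots,a_s$ with some eigenvalues $\alpha_1,\dots,\alpha_s$, which is exactly the characterization of a Glauber coherent state, so $|\chi\rangle\propto|\alpha_1,\dots,\alpha_s\rangle$. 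All the remaining Kraus conditions (higher $\vec l$) are then automatic.

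Finally, I would invert $K$ to recover $|\psi\rangle$: a direct series computation in the number basis gives $\tilde k_j^{-N_j}|\alpha\rangle\propto|\alpha_1,\dots,\alpha_j/\tilde k_j,\dots,\alpha_s\rangle$, so
\[
|\psi\rangle\;\propto\;K^{-1}|\alpha\rangle\;\propto\;|\alpha_1/\tilde k_1,\dots,\alpha_s/\tilde k_s\rangle
\]
is itself a coherent state, as claimed. The only step of any substance is the factorization $A_l\propto a^l\tilde k^{N}$; after that, the proof is a short eigenvector argument in which the strict inequalities $0<\tilde k_j<1$ enter twice, via the injectivity of $K$ and via the nontriviality of the higher-$l$ Kraus operators.
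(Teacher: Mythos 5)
Your proof is correct, but it follows a genuinely different route from the paper's. The paper gives no self-contained argument for this lemma: it defers to Lemma 2 of \cite{P-1}, whose proof is based on the explicit expression for the complementary channel $\widetilde{\Phi_{1}^{\prime}}$ (i.e.\ on the beamsplitter dilation: purity of the output for a pure input forces the joint system--environment state to factorize, and the explicit form of the complementary attenuator then identifies the input as coherent), and then notes that the diagonal multimode case follows by applying this mode by mode. You instead work directly with the Fock-basis Kraus decomposition: the normally ordered factorization $A_{\vec l}\propto\bigl(\prod_{j}a_{j}^{l_{j}}\bigr)K$ with $K=\prod_{j}\tilde k_{j}^{N_{j}}$ turns the purity hypothesis (\ref{pure}) into the joint eigenvalue equations $a_{j}K|\psi\rangle\propto K|\psi\rangle$, and the unbounded but well-defined inverse $K^{-1}$ maps the resulting coherent state back to a coherent state. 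The two arguments encode the same structural information (your factorized Kraus operators are precisely the matrix elements of the dilation isometry in the environment number basis), but yours is self-contained, avoids any appeal to \cite{P-1}, and treats all $s$ modes in one stroke rather than reducing to the one-mode lemma; the paper's version buys economy by reusing the published one-mode result and stays within the channel/complementary-channel formalism that organizes the rest of the section. Two small points you should make explicit to be airtight: verify $\sum_{\vec l}A_{\vec l}^{*}A_{\vec l}=I$ (or cite the dilation) so that these really are Kraus operators of the quantum-limited attenuator defined by (\ref{mindef1}); and record the domain bookkeeping, namely that $K|\psi\rangle$ lies in the domain of every $a_{j}$ because each $\tilde k_{j}<1$, and that $K$ is injective with $K^{-1}|\alpha\rangle$ given by a convergent series, which legitimizes the final inversion. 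Both strict inequalities $0<\tilde k_{j}<1$ are used exactly where you say they are.
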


For one mode, this is Lemma 2 from \cite{P-1} which implies that any pure
input $\rho $, such that $\Phi _{1}^{\prime }[\rho ]$ is also a pure state, is a
coherent state. The proof is based on the explicit expression for the
complementary channel $\widetilde{\Phi _{1}^{\prime }}$. By using this
expression for each mode, one can generalize the proof to the case of the
diagonal multimode channel $\Phi _{1}^{\prime }.$

This proves (\ref{amp}) for strictly concave $f$ and for the amplifiers $\Phi_2$ with $
K_{2}>I$. An arbitrary concave $f$ can then be monotonically approximated by
strictly concave functions by setting $f_{\varepsilon }(x)=f(x)-\varepsilon
x^{2},\,$\ and passing to the limit $\varepsilon \downarrow 0$ in (\ref{amp}).

In the case of the diagonal amplifier $\Phi _{2}$ with $K_{2}\geq I,$ we
take any sequence of diagonal operators $K^{(n)}>I,\,K^{(n)}\rightarrow
K_{2},$ and consider the corresponding diagonal amplifiers $\Phi _{2}^{(n)}$.
Then $\left\Vert \Phi _{2}^{(n)}[\rho ]-\Phi _{2}[\rho ]\right\Vert
_{1}\rightarrow 0$ and $\mathrm{Tr}f(\Phi _{2}^{(n)}[\rho ])\rightarrow
\mathrm{Tr}f(\Phi _{2}[\rho ])$ for any concave polygonal function $f$ on $
[0,1],$ such that $f(0)=0.$ This follows from the fact that any such
function is Lipschitz, $|f(x)-f(y)|\leq \varkappa |x-y|$, hence $\left\vert
\mathrm{Tr}f(\Phi _{2}^{(n)}[\rho ])-\mathrm{Tr}f(\Phi _{2}[\rho
])\right\vert \leq \varkappa \left\Vert \Phi _{2}^{(n)}[\rho ]-\Phi
_{2}[\rho ]\right\Vert _{1}.$ This implies that (\ref{amp}) holds for
polygonal concave functions $f$ and all quantum-limited amplifiers, hence by
(\ref{decomb}) the inequality (\ref{main1}) with such $f$\ holds for for all
Gaussian gauge-covariant channels. For arbitrary concave $f$ on $[0,1]$
there is a monotonously nondecreasing sequence of concave polygonal
functions $f_{m}$ converging to $f$ pointwise. Passing to the limit $
m\rightarrow \infty $ gives the first statement.

(ii) a) Notice that the conditions on the channel $\Phi $ imply that in the
decomposition $\Phi =\Phi _{2}\circ \Phi _{1}$ the attenuator $\Phi _{1}$ is
defined by the operator $K_{1}$ such that $0<K_{1}^{\ast }K_{1}<I$ (see
Appendix 1, Remark \ref{rem}). Applying the argument involving the relations
(\ref{decomp1}) with strictly concave $f$ to the relations
(\ref{decomp})-(\ref{decomc}), we obtain that for any pure minimizer $\rho =|\psi \rangle
\langle \psi |$ of $\mathrm{Tr}f(\Phi \lbrack |\psi \rangle \langle \psi
|])$ the output of the quantum-limited attenuator $\Phi _{1}[|\psi \rangle
\langle \psi |]$ is necessarily a pure state. Applying Lemma \ref{2} to the
attenuator $\Phi _{1}$ we conclude that $|\psi \rangle \langle \psi |$ is
necessarily a coherent state.

b) In this case we just apply the argument involving the relations (\ref
{decomp1}) with strictly concave $f$ to the quantum-limited amplifier $\Phi
=\Phi _{2}.$
\end{proof}

Theorem \ref{T1} can be extended to Gaussian gauge-contravariant channel
satisfying $\Phi \lbrack U_{\phi }\rho U_{\phi }^{\ast }]=U_{\phi }^{\ast
}\Phi \lbrack \rho ]U_{\phi }$ instead of (\ref{g-c}). The proof follows
from the fact that the complementary $\tilde{\Phi}_{2}$ of the diagonal
quantum-limited amplifier $\Phi _{2}$ is just the diagonal quantum-limited
gauge-contravariant channel (see \cite{P-2} for detail).

\section{Implications for the additivity}

For any $p>1$ the output purity of a channel $\Phi $ is defined as
\begin{equation*}
\nu _{p}(\Phi )=\sup_{\rho \in \mathfrak{S}(\mathcal{H})}\mathrm{Tr}\Phi
\lbrack \rho ]^{p}.
\end{equation*}

\begin{corollary}
\label{dec copy(1)} For any Gaussian gauge-covariant channel $\,\Phi $ the
output purity is equal to $\nu _{p}(\Phi )=\mathrm{Tr}\Phi \lbrack |0\rangle
\langle 0|]^{p}.$ The multiplicativity property
\begin{equation}
\nu _{p}(\Phi \otimes \Psi )=\nu _{p}(\Phi )\nu _{p}(\Psi )  \label{mul-rel}
\end{equation}
holds for any two Gaussian gauge-covariant channels $\Phi $ and $\Psi $.
\end{corollary}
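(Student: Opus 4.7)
The plan is to reduce both assertions to Theorem~\ref{T1}(i) with an appropriate choice of the concave test function $f$.

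For the identification of $\nu_p(\Phi)$, I would apply Theorem~\ref{T1}(i) with $f(x) = -x^p$ on $[0,1]$. Since $p>1$ makes $x^p$ convex, $-x^p$ is concave on $[0,1]$ and satisfies $f(0)=0$. Because the spectrum of the output state $\Phi[\rho]$ is contained in $[0,1]$, the theorem yields $\mathrm{Tr}\,\Phi[\rho]^p \leq \mathrm{Tr}\,\Phi[|0\rangle\langle 0|]^p$ for every state $\rho$. Taking the supremum over $\rho$, with the vacuum itself realizing equality, gives $\nu_p(\Phi) = \mathrm{Tr}\,\Phi[|0\rangle\langle 0|]^p$.

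For the multiplicativity (\ref{mul-rel}), the inequality $\nu_p(\Phi\otimes\Psi) \geq \nu_p(\Phi)\nu_p(\Psi)$ is immediate by testing on product inputs $\rho_1 \otimes \rho_2$ with each factor optimal. The reverse inequality I would obtain by applying the first part of the corollary to the channel $\Phi \otimes \Psi$ itself, viewed as a single Gaussian gauge-covariant channel on the combined $(s_\Phi+s_\Psi)$-mode system. Since the joint vacuum factorizes as $|0\rangle_{12} = |0\rangle_1 \otimes |0\rangle_2$ and the tensor structure is preserved by each factor channel, one obtains
\begin{equation*}
\nu_p(\Phi \otimes \Psi) = \mathrm{Tr}\left(\Phi[|0\rangle_1\langle 0|]\otimes \Psi[|0\rangle_2\langle 0|]\right)^p = \mathrm{Tr}\,\Phi[|0\rangle_1\langle 0|]^p \cdot \mathrm{Tr}\,\Psi[|0\rangle_2\langle 0|]^p = \nu_p(\Phi)\,\nu_p(\Psi).
\end{equation*}

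The only point that still needs verification is that $\Phi \otimes \Psi$ indeed belongs to the class of Gaussian gauge-covariant channels to which Theorem~\ref{T1}(i) applies. Gauge-covariance is transparent because the generator of the joint gauge action is $N_1 \otimes I + I \otimes N_2$, whence $U_\phi^{12} = U_\phi^{(1)} \otimes U_\phi^{(2)}$, and (\ref{g-c}) for each factor implies it for the tensor product. The Gaussian structure is preserved by a block-diagonal check on displacement operators: writing $z=(z_1,z_2)\in \mathbf{Z}_1\oplus\mathbf{Z}_2$, one has $(\Phi\otimes\Psi)^{\ast}[D(z_1)\otimes D(z_2)] = D(K_\Phi^{\ast}z_1)\otimes D(K_\Psi^{\ast}z_2)\,\exp(-z_1^{\ast}\mu_\Phi z_1 - z_2^{\ast}\mu_\Psi z_2)$, which is of the form (\ref{defprima}) with $K = K_\Phi \oplus K_\Psi$ and $\mu = \mu_\Phi \oplus \mu_\Psi$, the inequality (\ref{ineq}) being inherited blockwise. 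No serious obstacle is anticipated; the substance of the corollary lies entirely in Theorem~\ref{T1}(i), and the remainder is bookkeeping.
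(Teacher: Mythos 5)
Your proposal is correct and follows essentially the same route as the paper: the first claim via Theorem~\ref{T1}(i) with $f(x)=-x^{p}$, and multiplicativity by viewing $\Phi\otimes\Psi$ as a single Gaussian gauge-covariant channel whose optimal (vacuum) input factorizes. The only difference is that you spell out the bookkeeping (gauge-covariance and the block-diagonal form $K=K_{\Phi}\oplus K_{\Psi}$, $\mu=\mu_{\Phi}\oplus\mu_{\Psi}$ of the tensor-product channel) which the paper simply asserts.
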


\begin{proof}
The first statement follows from Theorem \ref{T1} by taking $
f(x)=-x^{p}, $ so that $\nu _{p}(\Phi )=-\min_{\rho }\mathrm{Tr}f(\Phi
\lbrack \rho ]).$ The second statement then follows from the fact that the
channel $\Phi \otimes \Psi $ is also gauge-covariant and from
multiplicativity of the vacuum state.
\end{proof}

The output purity for the channel (\ref{defprima}) can be explicitly
computed as
\begin{equation*}
\nu _{p}(\Phi )=\det \left[ \left( \mu +KK^{\ast }/2+I/2\right) ^{p}-\left(
\mu +KK^{\ast }/2-I/2\right) ^{p}\right] .
\end{equation*}
The formula follows from the fact that the state $\Phi \lbrack |0\rangle
\langle 0|]$ is Gaussian with the covariance matrix $\mu +KK^{\ast }/2$ and
from the expression for the spectrum of a Gaussian density operator \cite{hir}.

The minimal output R\'enyi entropy of a channel $\Phi $ is expressed via its
output purity as follows
\begin{equation*}
\check{R}_{p}(\Phi )=\frac{1}{1-p}\log \nu _{p}(\Phi )
\end{equation*}
and multiplicativity property (\ref{mul-rel}) can be rewritten as the
additivity of the minimal output R\'enyi entropy
\begin{equation}
\check{R}_{p}(\Phi \otimes \Psi )=\check{R}_{p}(\Phi )+\check{R}_{p}(\Psi ).
\label{add-mre}
\end{equation}
In the limit $p\downarrow 1$ (or taking $f(x)=-x\log x$) we recover the
additivity of the minimal output von Neumann entropy established in \cite
{P-2}.
\begin{equation*}
\min_{\rho _{12}}H(\left( \Phi \otimes \Psi \right) [\rho _{12}])=\min_{\rho
_{1}}H(\Phi \lbrack \rho _{1}])+\min_{\rho _{2}}H(\Phi \lbrack \rho _{2}]).
\end{equation*}
The additivity result in \cite{P-2} is more general in that it allows the
case where one of the channels is gauge-covariant, while the other is
contravariant. On the other hand, the proof in \cite{P-2} is restricted to
states with finite second moments, while the present one does not require
this.

\section{\protect\bigskip Majorization for quantum-classical Gaussian channel
}

\label{maj}

It is helpful to consider the map $\rho \rightarrow \langle z|\rho |z\rangle
$ as a \textquotedblleft quantum-classical Gaussian
channel\textquotedblright\ which transforms Gaussian density operators into
Gaussian probability densities. We will consider a more general
transformation:
\begin{equation*}
\rho \rightarrow p_{\rho }(z)=\mathrm{Tr}\rho D(z)\rho _{0}D(z)^{\ast },
\end{equation*}
where $D(z)$ are the displacement operators, $\rho _{0}$ is the Gaussian
gauge-invariant state with the quantum characteristic function $\phi _{0}(z)=\exp \left( -z^{\ast
}\alpha _{0} z\right) ,$ where $\alpha_{0}\geq \frac{I}{2}.$
Notice that $p_{\rho }(z)=\langle z|\rho |z\rangle $ if $\rho _{0}$
is the vacuum state corresponding to $\alpha _{0}=\frac{I}{2}$.

The function $p_{\rho }(z)$ is bounded by 1 and is a continuous probability
density, the normalization follows from the resolution of the identity operator in $\mathcal{H}$
\begin{equation*}
\int_{\mathbb{C}^{s}}D(z)\rho _{0} D(z)^{\ast }\frac{d^{2s}z}{\pi
^{s}}=I_{\mathcal{H}}.
\end{equation*}

\begin{proposition}
\label{T2} Let $f$ be a concave function on $[0,1],$ such that $f(0)=0,$ then for arbitrary state $\rho$
\begin{equation}
\int_{\mathbb{C}^{s}}f(p_{\rho }(z))\frac{d^{2s}z}{\pi ^{s}}\geq \int_{
\mathbb{C}^{s}}f(p_{|\zeta \rangle \langle \zeta |}(z))\frac{d^{2s}z}{\pi
^{s}}.  \label{aim}
\end{equation}
\end{proposition}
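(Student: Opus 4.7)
The plan is to reduce Proposition~\ref{T2} to Theorem~\ref{T1} via a gauge-covariant Gaussian channel $\Phi_k$ with growing gain $k$, combined with the Berezin--Lieb (Jensen) inequality and an explicit large-$k$ asymptotic.

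For $k>1$ let $\Phi_k$ be the Gaussian gauge-covariant channel on $\mathcal{H}$ defined by $K_k = k\,I$ and $\mu_k = k^2\alpha_0 - \tfrac12 I$. The hypothesis $\alpha_0 \ge I/2$ guarantees $\mu_k \ge (k^2-1)I/2$, so (\ref{ineq}) is satisfied. A direct computation with characteristic functions starting from (\ref{defprima}) yields
\[
\langle z|\Phi_k[\rho]|z\rangle \;=\; \frac{1}{k^{2s}}\,p_\rho(z/k),
\]
while Corollary~\ref{dec copy(1)} in the $p\to\infty$ limit, applied to the tensor-product thermal state $\Phi_k[|0\rangle\langle 0|]$, gives $\|\Phi_k[\sigma]\|_\infty \le k^{-2s}$ for every state $\sigma$. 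I then set $\tilde f_k(x) := f(k^{2s}x)$, a concave function on $[0,k^{-2s}]$ with $\tilde f_k(0)=0$ (if needed, trivially extended concavely to $[0,1]$ so that Theorem~\ref{T1} formally applies).

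The elementary Jensen inequality $\int \tilde f_k(\langle z|\tau|z\rangle)\,d^{2s}z/\pi^s \ge \mathrm{Tr}\,\tilde f_k(\tau)$, applied with $\tau = \Phi_k[\rho]$ together with the change of variable $y=z/k$, and Theorem~\ref{T1} applied to the pair $(\Phi_k,\tilde f_k)$, yield the chain
\[
k^{2s}\!\int f(p_\rho(y))\,\frac{d^{2s}y}{\pi^s}\;\ge\;\mathrm{Tr}\,\tilde f_k(\Phi_k[\rho])\;\ge\;\mathrm{Tr}\,\tilde f_k(\Phi_k[|\zeta\rangle\langle\zeta|]).
\]
Dividing by $k^{2s}$ and sending $k\to\infty$ then reduces the proof of (\ref{aim}) to verifying
\[
\lim_{k\to\infty} k^{-2s}\,\mathrm{Tr}\,\tilde f_k(\Phi_k[|\zeta\rangle\langle\zeta|]) \;=\; \int f(p_{|\zeta\rangle\langle\zeta|}(y))\,\frac{d^{2s}y}{\pi^s}.
\]

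This last asymptotic is the main technical step I expect, but it is tractable because $\Phi_k[|\zeta\rangle\langle\zeta|]$ is an explicit displaced Gaussian state: after the symplectic diagonalization (\ref{unieq})--(\ref{product}) it factorizes into a tensor product of displaced thermal states whose mean photon numbers grow like $k^2$ and whose eigenvalues form a geometric sequence in each mode. A Riemann-sum computation (with the substitutions $u = e^{-t}/(\alpha_0+\tfrac12)^{(j)}$ in the $j$-th mode) identifies the limit with the classical integral on the right-hand side. The only delicate point is controlling the Riemann-sum convergence uniformly for a general concave $f$ with $f(0)=0$, which may be handled by the monotone polygonal-approximation argument appearing at the end of the proof of Theorem~\ref{T1}.
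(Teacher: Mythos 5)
Your argument is correct in outline and does reach (\ref{aim}), but it takes a genuinely different route from the paper. The paper constructs the measure-and-reprepare channel (\ref{mr}), whose lower symbol $\underline{p}$ is the rescaled $p_\rho$ and whose upper symbol $\bar p$ is $p_\rho$ convolved with a shrinking Gaussian, and then sandwiches $c^{-2s}\mathrm{Tr}f(\Phi_c[\rho])$ between the two integrals using the two-sided generalized Berezin--Lieb inequality (\ref{blin}); the only limit it must control is the $L_1$-convergence $p_\rho\ast q_{\alpha_0'/c^2}\to p_\rho$ in (\ref{ine}), which is soft because $p_\rho$ is bounded and continuous, and no spectral information about any output state is ever used. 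You instead pick $K_k=kI$, $\mu_k=k^2\alpha_0-\tfrac12 I$ precisely so that the Husimi function of $\Phi_k[\rho]$ equals $k^{-2s}p_\rho(z/k)$ (this identity, the admissibility check $\mu_k\ge\tfrac12(k^2-1)I$ under $\alpha_0\ge I/2$, and the bound $\|\Phi_k[\sigma]\|_\infty\le k^{-2s}$ are all correct), use only the easy Jensen half of Berezin--Lieb, and shift the entire analytic burden onto the asymptotics of $k^{-2s}\mathrm{Tr}\tilde f_k(\Phi_k[|\zeta\rangle\langle\zeta|])$. That trade is legitimate: the limit is needed only for the reference coherent state, where $\Phi_k[|\zeta\rangle\langle\zeta|]$ is a displaced thermal state with $N_j+1=k^2(\alpha_j+\tfrac12)$, so $k^{2s}\lambda_{n}=\prod_j(\alpha_j+\tfrac12)^{-1}e^{-n_j\log(1+1/N_j)}$ exactly and the Riemann sum converges to $\int f(p_{|0\rangle\langle 0|})\,d^{2s}z/\pi^s$ for polygonal $f$, with general concave $f$ recovered by the monotone polygonal approximation, as you note. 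What the paper's route buys is that it never opens up the multimode Gaussian spectrum and its error term vanishes by a convolution argument valid for every input; what your route buys is that only the one-sided coherent-state Jensen inequality is required, at the price of an explicit multimode Riemann-sum computation that you have outlined convincingly but not written out --- that computation is the one step you still need to complete for a full proof.
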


\begin{proof}
For any $c>0$ consider \textquotedblleft
measure-reprepare\textquotedblright\ channel $\Phi _{c}$ defined by the
relation
\begin{equation}
\Phi _{c}[\rho ]=\int \frac{d^{2s}z}{\pi ^{s}c^{2s}}\;\mbox{Tr}[\rho
D(c^{-1}z)\rho _{0}D^{\ast }(c^{-1}z)]\;D(z)\rho _{0}^{\prime}D^{\ast
}(z)\;,  \label{mr}
\end{equation}
where $\rho _{0}^\prime$ is another gauge-invariant Gaussian state with
the characteristic function $\phi _{0}^\prime(z) =\exp [-z^{\ast }\alpha _{0}^\prime z]$.
The map~(\ref{mr}) is a gauge-covariant bosonic Gaussian channel
which in the Heisenberg representation acts on $D(z)$ as
\begin{equation*}
\Phi _{c}^{\ast }[D(z)]=D(cz)\;\exp [-z^{\ast }(\alpha _{0}^\prime +c^{2}\alpha
_{0})z] ,
\end{equation*}
cf. \cite{P-2}. Therefore by Theorem \ref{T1},
\begin{equation}
\mathrm{Tr}f(\Phi _{c}[\rho ])\geq \mathrm{Tr}f(\Phi _{c}[|\zeta \rangle
\langle \zeta |])  \label{main}
\end{equation}
for all states $\rho $ and any coherent state $|\zeta \rangle \langle \zeta |
$. We will prove the Proposition \ref{T2} by taking the limit $c\rightarrow
\infty .$

In the proof we also use a simple generalization of the Berezin-Lieb
inequalities \cite{bl}:
\begin{equation}
\int_{\mathbb{C}^{s}}f(\underline{p}(z))\frac{d^{2s}z}{\pi ^{s}}\leq \mathrm{
Tr}f(\sigma )\leq \int_{\mathbb{C}^{s}}f(\bar{p}(z))\frac{d^{2s}z}{\pi ^{s}},
\label{blin}
\end{equation}
valid for any quantum state admitting the representation
\begin{equation*}
\sigma =\int_{\mathbb{C}^{s}}\underline{p}(z)D(z)\rho _{0}^{\prime}D(z)^{\ast }\frac{
d^{2s}z}{\pi ^{s}}
\end{equation*}
with a probability density $\underline{p}(z)$. In the right side of (\ref
{blin}) $$\bar{p}(z)=\mbox{Tr}\sigma \;D(z)\rho _{0}^{\prime}D^{\ast }(z).$$ The original inequalities refer to the
case where $\rho _{0}$ is a pure state, but the proof applies to the more
general case (see Appendix 2). In the inequalities (\ref{blin}) one has to
assume that $f$ is defined on $[0,\infty )$ (in fact, $\underline{p}(z)$ can
be unbounded). We shall assume this for a while.

Taking $\sigma =\Phi _{c}[\rho ],$ from (\ref{mr}) we have
\begin{equation*}
\underline{p}(z)=\frac{1}{c^{2s}}\;\mbox{Tr}\rho D(c^{-1}z)\rho
_{0}D^{\ast }(c^{-1}z)=\frac{1}{c^{2s}}\;p_{\rho} (c^{-1}z)\;.
\end{equation*}while%
\begin{equation}
\bar{p}(z)=\mathrm{Tr}\Phi _{c}[\rho ]D(z)\rho _{0}^{\prime}D(z)^{\ast }=\int_{
\mathbb{C}^{s}}\underline{p}(w)\mathrm{Tr}\rho _{0}^{\prime}D(z-w)\rho
_{0}^{\prime} D(z-w)^{\ast }\frac{d^{2s}w}{\pi ^{s}}.  \label{svert}
\end{equation}
By using the quantum Parceval formula \cite{hol}, we obtain
\begin{eqnarray*}
\pi ^{-s}\mathrm{Tr}\rho _{0}^{\prime}D(z)\rho _{0}^{\prime}D(z)^{\ast } &=&\int_{\mathbb{C}
^{s}}\phi _{0}^{\prime}(w)^{2}\exp \left( 2i\mathrm{Im}z^{\ast }w\right) \frac{d^{2s}w
}{\pi ^{2s}} \\
&=&\pi ^{-s}\det \left( 2\alpha _{0}^{\prime}\right) ^{-1}\exp \left( -\frac{1}{2}
z^{\ast }[\alpha _{0}^{\prime}]^{-1}z\right) \equiv q_{\alpha _{0}^{\prime}}(z)
\end{eqnarray*}
-- the probability density of a Gaussian distribution. Substituting this into (\ref{svert}), we have
\begin{eqnarray}
\bar{p}(z)
&=&\int d^{2s}w\;\underline{p}(w)\;
q_{\alpha_{0}^{\prime} }(z-w )  \notag  \\
&=&\int {d^{2s}w^{\prime }}\;p_{\rho} (w^{\prime })\;q_{\alpha_{0}^{\prime} }(z-cw^{\prime } ) \notag  \\
&=& \frac{1}{c^{2s}}p_{\rho }\ast q_{\alpha_{0}^{\prime} /{c^{2}}}(c^{-1}z).\label{gg}
\end{eqnarray}
Here $q_{\alpha_{0}^{\prime} /c^{2}}(z)=c^{2s}q_{\alpha_{0}^{\prime} }(cz)$ is the probability density of a Gaussian distribution
tending to $\delta -$function when $c\rightarrow \infty .$

With the change of the integration variable $c^{-1}z\rightarrow z$, the inequalities~(\ref{blin}) become
\begin{equation*}
\int_{\mathbb{C}^{s}}f(c^{-2s}p_{\rho }(z))\frac{d^{2s}z}{\pi ^{s}}\leq
c^{-2s}\mathrm{Tr}f(\Phi _{c}[\rho ])\leq \int_{\mathbb{C}
^{s}}f(c^{-2s}p_{\rho }\ast q_{\alpha_{0}^{\prime} /{c^{2}}}(z))\frac{d^{2s}z}{\pi ^{s}},
\end{equation*}
Substituting $
\rho =|\zeta \rangle \langle \zeta |,$ we have
\begin{equation*}
\int_{\mathbb{C}^{s}}f(c^{-2s}p_{|\zeta \rangle \langle \zeta |}(z))\frac{
d^{2s}z}{\pi ^{s}}\leq c^{-2s}\mathrm{Tr}f(\Phi _{c}[|\zeta \rangle \langle
\zeta |])\leq \int_{\mathbb{C}^{s}}f(c^{-2s}p_{|\zeta \rangle \langle \zeta
|}\ast\, q_{\alpha_{0}^{\prime} /{c^{2}}}(z))\frac{d^{2s}z}{\pi ^{s}}.
\end{equation*}
Combining the last two displayed formulas with (\ref{main}) we obtain
\begin{eqnarray}
&&\int_{\mathbb{C}^{s}}g(p_{\rho }(z))\frac{d^{2s}z}{\pi ^{s}}-\int_{\mathbb{
C}^{s}}g(p_{|\zeta \rangle \langle \zeta |}(z))\frac{d^{2s}z}{\pi ^{s}}
\notag \\
&\geq &\int_{\mathbb{C}^{s}}g(p_{\rho }(z))\frac{d^{2s}z}{\pi ^{s}}-\int_{
\mathbb{C}^{s}}g(p_{\rho }\ast q_{\alpha_{0}^{\prime} /{c^{2}}}(z))\frac{d^{2s}z}{\pi ^{s}}
,  \label{ine}
\end{eqnarray}
where we denoted $g(x)=f(c^{-2s}x),$ which is again a concave function.
Moreover, arbitrary concave polygonal function $g$ on $[0,1],$ satisfying $
g(0)=0,$ can be obtained in this way by defining
\begin{equation*}
f(x)=\left\{
\begin{array}{l}
g(c^{2s}x),\quad x\in \lbrack 0,c^{-2s}] \\
g(1)+g^{\prime }(1)(x-c^{-2s}),\quad x\in \lbrack c^{-2s},\infty )
\end{array}
\right. ,
\end{equation*}
hence (\ref{ine}) holds for any such function. Then the right hand side of
the inequality (\ref{ine}) tends to zero as $c\rightarrow \infty .$ Indeed,
for polygonal function $\left\vert g(x)-g(y)\right\vert \leq \varkappa
\left\vert x-y\right\vert ,$ and the asserted convergence follows from the
convergence $p_{\rho }\ast q_{\alpha_{0}^{\prime} /{c^{2}}}\longrightarrow p_{\rho }$ in $
L_{1}:$ if $p(z)$ is a bounded continuous probability density, then
\begin{equation*}
\lim_{c\rightarrow \infty }\int_{\mathbb{C}^{s}}\left\vert p\ast q_{\alpha_{0}^\prime /{
c^{2}}}(z)-p(z)\right\vert d^{2s}z=0.
\end{equation*}
Thus we obtain (\ref{aim}) for the concave polygonal functions $f.$ But for
arbitrary concave $f$ on $[0,1]$ there is a monotonously
nondecreasing sequence of concave polygonal functions $f_{n}$ converging to $
f$ . Applying Beppo-Levy's theorem, we obtain the statement.
\end{proof}

\section{Appendix}

1. The concatenation $\Phi =\Phi _{2}\circ \Phi _{1}$ of two Gaussian gauge-covariant channels $\Phi
_{1} $, $\Phi _{2}$ obeys the rule
\begin{eqnarray}
K &=&K_{2}K_{1},\quad  \label{c1} \\
\mu &=&K_{2}\mu _{1}K_{2}^{\ast }+\mu _{2}.  \label{c2}
\end{eqnarray}

\begin{proposition}
\label{prop1}\cite{P-2} Any bosonic Gaussian gauge-covariant channel $\Phi $ is a
concatenation of quantum-limited attenuator $\Phi _{1}$ and quantum-limited
amplifier $\Phi _{2}$.
\end{proposition}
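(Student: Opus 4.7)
The plan is to construct the quantum-limited attenuator $\Phi_{1}$ (with parameters $K_{1},\mu_{1}$) and the quantum-limited amplifier $\Phi_{2}$ (with parameters $K_{2},\mu_{2}$) explicitly by solving the concatenation relations (\ref{c1})--(\ref{c2}) under the quantum-limited constraints (\ref{mindef1})--(\ref{mindef2}).

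First I would use (\ref{c2}) to pin down $K_{2}K_{2}^{\ast}$. Substituting $\mu_{1}=\tfrac{1}{2}(I-K_{1}K_{1}^{\ast})$ and $\mu_{2}=\tfrac{1}{2}(K_{2}K_{2}^{\ast}-I)$, and using $K_{2}K_{1}K_{1}^{\ast}K_{2}^{\ast}=KK^{\ast}$ which follows from (\ref{c1}), one obtains
\[
\mu \;=\; K_{2}K_{2}^{\ast}-\tfrac{1}{2}(KK^{\ast}+I),
\]
so the only consistent choice is $K_{2}K_{2}^{\ast}=A:=\mu+\tfrac{1}{2}(I+KK^{\ast})$.

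Next I would verify that this choice is feasible. The two sides of the defining inequality (\ref{ineq}), namely $\mu\geq\tfrac{1}{2}(KK^{\ast}-I)$ and $\mu\geq\tfrac{1}{2}(I-KK^{\ast})$, translate respectively into $A\geq KK^{\ast}\geq 0$ and $A\geq I$. I can therefore take $K_{2}:=A^{1/2}$, a positive-definite (hence invertible) matrix satisfying $K_{2}K_{2}^{\ast}\geq I$, which together with $\mu_{2}=\tfrac{1}{2}(K_{2}K_{2}^{\ast}-I)$ defines a bona fide quantum-limited amplifier $\Phi_{2}$.

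Finally I would set $K_{1}:=K_{2}^{-1}K=A^{-1/2}K$, so that (\ref{c1}) holds automatically, and check
\[
K_{1}K_{1}^{\ast}=A^{-1/2}\,KK^{\ast}\,A^{-1/2}\leq I,
\]
which holds precisely because $KK^{\ast}\leq A$. Hence $\Phi_{1}$ with $\mu_{1}=\tfrac{1}{2}(I-K_{1}K_{1}^{\ast})$ is a bona fide quantum-limited attenuator, and a one-line calculation confirms (\ref{c2}): $K_{2}\mu_{1}K_{2}^{\ast}+\mu_{2}=\tfrac{1}{2}(A-KK^{\ast})+\tfrac{1}{2}(A-I)=\mu$. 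The only substantive point, and the one I would flag as the conceptual crux, is the observation that the two halves of the symmetric inequality (\ref{ineq}) deliver exactly the two inequalities needed: one half makes the amplifier legitimate ($A\geq I$), the other makes the attenuator legitimate ($KK^{\ast}\leq A$); everything else is routine linear algebra. Invertibility of $K_{1}$ is not needed in general (and fails when $\mu=\tfrac{1}{2}(KK^{\ast}-I)$), but holds under the stronger hypothesis $\mu>\tfrac{1}{2}(KK^{\ast}-I)$ used in Theorem \ref{T1}(ii)(a).
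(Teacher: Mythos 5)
Your proposal is correct and follows essentially the same route as the paper: the same choice $K_{2}=\sqrt{\mu+\tfrac{1}{2}(KK^{\ast}+I)}$, $K_{1}=K_{2}^{-1}K$, with the two halves of the inequality (\ref{ineq}) supplying $A\geq I$ and $A\geq KK^{\ast}$ exactly as in the paper's argument. The only cosmetic difference is that you verify the attenuator condition via the congruence $K_{1}K_{1}^{\ast}=A^{-1/2}KK^{\ast}A^{-1/2}\leq I$, whereas the paper checks the equivalent bound $K_{1}^{\ast}K_{1}\leq I$ using a generalized inverse of $KK^{\ast}$.
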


\begin{proof}
By inserting
\begin{equation*}
\mu _{1}=\frac{1}{2}\left( I-K_{1}K_{1}^{\ast }\right) =\frac{1}{2}\left(
I-|K_{1}^{\ast }|^{2}\right) ,\quad \mu _{2}=\frac{1}{2}\left(
K_{2}K_{2}^{\ast }-I\right) =\frac{1}{2}\left( \left\vert K_{2}^{\ast
}\right\vert ^{2}-I\right)
\end{equation*}
into (\ref{c2}) and using (\ref{c1}) we obtain
\begin{equation}
\left\vert K_{2}^{\ast }\right\vert ^{2}=K_{2}K_{2}^{\ast }=\mu +\frac{1}{2}
(KK^{\ast }+I)\geq \left\{
\begin{array}{c}
I \\
KK^{\ast }
\end{array}
\right.  \label{ineq1}
\end{equation}
from the inequality (\ref{ineq}). By using operator monotonicity of the
square root, we have
\begin{equation*}
\left\vert K_{2}^{\ast }\right\vert \geq I,\quad \left\vert K_{2}^{\ast
}\right\vert \geq \left\vert K^{\ast }\right\vert .
\end{equation*}
The first inequality (\ref{ineq1}) implies that choosing
\begin{equation}
K_{2}=\left\vert K_{2}^{\ast }\right\vert =\sqrt{\mu +\frac{1}{2}(KK^{\ast
}+I)}  \label{K2}
\end{equation}
and the corresponding $\mu _{2}=\frac{1}{2}\left( \left\vert K_{2}^{\ast
}\right\vert ^{2}-I\right) ,$ we obtain (diagonalizable) quantum-limited
amplifier.

Then with
\begin{equation}
K_{1}=\left\vert K_{2}^{\ast }\right\vert ^{-1}K  \label{K1}
\end{equation}
we obtain, taking into account the second inequality in (\ref{ineq1})
\begin{equation}
K_{1}^{\ast }K_{1}=K^{\ast }\left\vert K_{2}^{\ast }\right\vert
^{-2}K=K^{\ast }\left[ \mu +\frac{1}{2}(KK^{\ast }+I)\right] ^{-1}K\leq I,  \label{K3}
\end{equation}
which implies $K_{1}^{\ast
}K_{1}\leq I,$ hence $K_{1}$ with the corresponding $\mu _{1}=\frac{1}{2}
\left( I-K_{1}K_{1}^{\ast }\right) $ give the quantum-limited attenuator.
\end{proof}

\begin{remark}
\label{rem} The inequality (\ref{mugr}) via (\ref{K3}) implies $K_{1}^{\ast
}K_{1}<I.$ Invertibility of $K$ implies $K_{1}^{\ast }K_{1}>0$.
\end{remark}

2. For completeness we sketch the proof of the required generalization of
Berezin-Lieb inequalities. Let $\mathcal{X}$ be a measurable space with $
\sigma -$finite measure $\mu ,$ and $P(x)$ a weakly measurable function on $
\mathcal{X}$ \ whose values are density operators in a separable Hilbert
space $\mathcal{H}$ such that
\begin{equation*}
\int_{\mathcal{X}}P(x)\mu (dx)=I_{\mathcal{H}},
\end{equation*}
where the integral converges in the sense of weak operator topology. Let $
\rho $ be a density operator in $\mathcal{H}$ admitting representation
\begin{equation*}
\rho =\int_{\mathcal{X}}\underline{p}(x)P(x)\mu (dx),
\end{equation*}
where $\underline{p}(x)$ is a bounded probability density. Denote $\bar{p}
(x)=\mathrm{Tr}\rho P(x),$ which is a probability density uniformly bounded
by 1. Then for a concave function $f$ defined on $[0,\infty )$ and
satisfying $f(0)=0$
\begin{equation}
\int_{\mathcal{X}}f(\underline{p}(x))\mu (dx)\leq \mathrm{Tr}f(\rho )\leq
\int_{\mathcal{X}}f(\bar{p}(x))\mu (dx).  \label{be}
\end{equation}

Put $k=\max \left\{ 1,\sup_{x}\underline{p}(x)\right\} $ and consider
restriction of $f$ to $[0,k].$ Then there is a monotonously nondecreasing
sequence of concave polygonal functions $f_{n}$ converging to $f$ pointwise
on $[0,k]$ and satisfying $\ f_{n}(0)=0.$ Since $\left\vert
f_{n}(x)\right\vert \leq \varkappa _{n}\left\vert x\right\vert ,$ the
integrals and the trace in (\ref{be}) with $f$ replaced by $f_{n}$ are
finite for all $n.$ Let us prove (\ref{be}) for concave polygonal functions $
f_{n}$ and then take the limit $n\rightarrow \infty .$ This will also show
that the integrals and trace in (\ref{be}) are well defined although may
take the value $+\infty .$

The second inequality follows from $\mathrm{Tr}f(\rho )P(x)\leq f(\mathrm{Tr}
\rho P(x))$ which is a consequence of Jensen inequality applied along with
the spectral decomposition of $\rho .$ To prove the first inequality
consider the positive operator-valued measure
\begin{equation*}
M(B)=\int_{B}P(x)\mu (dx),\quad B\subseteq \mathcal{X},
\end{equation*}
and its Naimark dilation to a projection-valued measure $\left\{
E(B)\right\} $ in a larger Hilbert space $\mathcal{\tilde{H}}\supseteq
\mathcal{H}.$ Consider the bounded operator $R=\int_{\mathcal{X}}\underline{p
}(x)E(dx)$ in $\mathcal{\tilde{H}},$ then $f(R)=\int_{\mathcal{X}}f(
\underline{p}(x))E(dx)$ and
\begin{equation*}
\rho =PRP,\quad Pf(R)P=\int_{\mathcal{X}}f(\underline{p}(x))\mu (dx),
\end{equation*}
where $P$ is projection from $\mathcal{\tilde{H}}$ onto $\mathcal{H}$. The
required inequality then follows from the more general fact $\mathrm{Tr}
Pf(R)P\leq \mathrm{Tr}f(PRP)$ \cite{nai}.

\section{Addendum}
K. R. Parthasarathy \cite{par} pointed out to
us that attainability of an infimum in Lemma 1 of \cite{P-1} needs
explanation. A proof answering this question was published recently in \cite{UMN}.
Below we give still shorter argument using notations from  \cite{P-1}.

Let  $f$ \ be a strictly concave function on $[0,1]$ such that $f(0)=0,$
then define%
\begin{equation}
F(\sigma )=\mathrm{Tr}f(\sigma ),  \label{F}
\end{equation}%
for a density operator $\sigma .$ It is sufficient to show that if $\mathcal{%
A}_{\kappa }$ is quantum-limited amplifier then
\begin{equation}
F(\mathcal{A}_{\kappa }[\rho ])\geq F(\mathcal{A}_{\kappa }[|0\rangle
\langle 0|]),  \label{main2}
\end{equation}%
for all quantum states $\rho $ and for the vacuum state $|0\rangle \langle
0|.$

This proof is based on the following observation of A. Mari. The original proof of Lemma 1  in \cite{P-1} still works
when restricting the input $\rho $ to the set of states $\mathfrak{S}_{n}$
supported by the subspace spanned by the first $n$ Fock vectors. This set is
finite-dimensional (the infimum is hence attained), and it is mapped into
itself by a quantum-limited attenuator $\mathcal{E}_{\eta }$ (see e.g. \cite{ivan}). Therefore,
the  proof given in  \cite{P-1} implies that (\ref{main2}) holds for $\rho
\in \mathfrak{S}_{n}$. Notice that $\mathfrak{S}_{n}$ is dense in the set of
all quantum states because for a given state $\rho $ the states $\rho _{n}=%
\frac{P_{n}\rho P_{n}}{\mathrm{Tr}P_{n}\rho P_{n}}\in \mathfrak{S}_{n},$
where $P_{n}$ is the projection on the subspace spanned by the first $n$
Fock vectors, converge to $\rho $ in trace norm.

Then we use the approximation argument from the proof of Theorem 1 of the present paper. First notice that $%
f(x)\geq f(1)x,$ hence the right-hand side of (\ref{F}) is unambiguously
defined with $F(\sigma )\geq $ $f(1)$. Moreover, the functional $F(\sigma ),$
and hence $F(\Phi \lbrack \rho ]),$ is a pointwise limit of monotonously
nondecreasing sequence of continuous functionals. In fact, for arbitrary
concave $f$ on $[0,1]$ there is a monotonously nondecreasing sequence of
concave polygonal functions $f_{m}$ converging to $f$ pointwise. Any such
function is Lipschitz, $|f_{m}(x)-f_{m}(y)|\leq \varkappa _{m}|x-y|$, hence
the corresponding functional $F_{m}(\sigma )=\mathrm{Tr\,}f_{m}(\sigma )$ is
continuous (Lemma below).  Hence
\begin{equation*}
F_{m}(\Phi \lbrack \rho ])=\lim_{n\rightarrow \infty }F_{m}(\Phi \lbrack
\rho _{n}])\geq F_{m}(\Phi \lbrack |0\rangle \langle 0|]).
\end{equation*}%
Taking the limit $m\rightarrow \infty $ of monotonously nondecreasing
sequence $F_{m}$ gives (\ref{main2}) for a given state $\rho .$

\textbf{Lemma}
Let $|f(x)-f(y)|\leq \varkappa |x-y|$ for $x,y\in \lbrack 0,1],$
then $\left\vert \mathrm{Tr\,}f(\sigma )-\mathrm{Tr\,}f(\rho )\right\vert
\leq \varkappa \left\Vert \sigma -\rho \right\Vert _{1}.$

\begin{proof}
Let $\lambda _{1}\geq \lambda _{2}\geq ...\left( \mu _{1}\geq \mu _{2}\geq
...\right) $ be the eigenvalues of $\sigma $ (correspondingly, $\rho $). By
Mirsky's theorem (Lemma IV.3.2 \cite{P})
\begin{equation}
\sum_{i}\left\vert \lambda _{i}-\mu _{i}\right\vert \leq \left\Vert \sigma
-\rho \right\Vert _{1}.  \label{fann3}
\end{equation}%
Therefore$\left\vert \mathrm{Tr\,}f(\sigma )-\mathrm{Tr\,}f(\rho
)\right\vert \leq \sum_{i}\left\vert f(\lambda _{i})-f(\mu _{i})\right\vert
\leq \varkappa \sum_{i}\left\vert \lambda _{i}-\mu _{i}\right\vert \leq
\varkappa \left\Vert \sigma -\rho \right\Vert _{1}.$
\end{proof}

\section{Acknowledgments}
The authors are grateful to M. E. Shirokov for discussion. The work of A.S. Holevo was supported by the grant of Russian Scientific Foundation  (project No 14-21-00162).

\end{document}